\documentclass[letterpaper, 10 pt, conference]{ieeeconf} 
\IEEEoverridecommandlockouts                           
\overrideIEEEmargins                                     
\usepackage{amsmath} 
\usepackage{amssymb}
\usepackage{xcolor}
\usepackage{algorithm}
\usepackage{algcompatible}
\usepackage{graphicx}

\algnewcommand{\algorithmicand}{\textbf{and}}
\algnewcommand{\algorithmicor}{\textbf{or}}
\algnewcommand{\OR}{\algorithmicor}
\algnewcommand{\AND}{\algorithmicand}
\algrenewcommand\algorithmicindent{1.0em}

\newtheorem{theorem}{Theorem}
\newtheorem{assumption}{Assumption}
\newtheorem{remark}{Remark}
\newtheorem{definition}{Definition}

\newcommand{\col}{\mathrm{col}}
\newcommand{\diag}{\mathrm{diag}}

\title{\LARGE \bf
Hierarchical Cyber-Attack Detection in Large-Scale Interconnected Systems
}

\author{Twan Keijzer$^{*}$, Alexander J. Gallo$^{*}$ and Riccardo M.G. Ferrari$^{*}$
\thanks{$^{*}$Delft Centre for Systems and Control,
        Delft University of Technology, The Netherlands.
        {\tt\small \{t.keijzer,a.j.gallo,r.ferrari\}@tudelft.nl}}
\thanks{This paper has been partially supported by the AIMWIND project, which is funded by the Research Council of Norway under grant no. 312486.}
}

\begin{document}

\maketitle
\thispagestyle{empty}
\pagestyle{empty}

\begin{abstract}
    In this paper we present a hierarchical scheme to detect cyber-attacks in a hierarchical control architecture for large-scale interconnected systems (LSS). We consider the LSS as a network of physically coupled subsystems, equipped with a two-layer controller: on the local level, decentralized controllers guarantee overall stability and reference tracking; on the supervisory level, a centralized coordinator sets references for the local regulators. We present a scheme to detect attacks that occur at the local level, with malicious agents capable of affecting the local control. The detection scheme is computed at the supervisory level, requiring only limited exchange of data and model knowledge. We offer detailed theoretical analysis of the proposed scheme, highlighting its detection properties in terms of robustness, detectability and stealthiness conditions.
\end{abstract}

\section{Introduction}\label{sec:intro}
\noindent Modern engineering systems, ranging from large-scale infrastructure like electrical grids, water distribution and traffic networks, as well as industrial plants and consumer goods, have an increasing penetration of distributed computational resources, and a heavy reliance on communication networks. This improves performance and efficiency of these systems, and has led to the definition of \textit{cyber-physical systems} (CPS)
\cite{baheti2011cyber} as an analytical framework.
Although the integration of these ``cyber'' resources has great benefits, it also leads to the exposure to malicious tampering, as has been made evident in recent years by some high profile cases of cyber-attacks \cite{lee2008cyber,falliere2011w32}.
Because many of these systems are safety-critical \cite{giraldo2017security}, methods have been developed over the past decade to detect, isolate and mitigate attacks in CPS \cite{sandberg2015cyberphysical,chen2020guest}.

CPS are often also large-scale systems (LSS) \cite{lunze1992feedback,siljak2011decentralized}, i.e., they require a large number of states to be described, and are spatially distributed over large areas.
As such, centralized control architectures, with a single regulator managing the inputs for the entire system, are not feasible.
Thus, non-centralized control architectures have been developed, which rely on partitioning the LSS into \textit{subsystems} \cite{lunze1992feedback}, each of which is physically interconnected with neighbouring subsystems.
These control architectures can be further classified as decentralized, distributed, and hierarchical, all of which have attracted extensive literature (see for example the surveys \cite{scattolini2009architectures,chanfreut2021survey}, and references therein).

Non-centralized monitoring architectures have been proposed for both fault diagnosis \cite{blanke2016distributed,teixeira2014,arrichiello2015observer,davoodi2016simultaneous,boem2017distributed,boem2020distributed,khalili2020distributed} and cyber-attack detection \cite{Deghat2019a,Keijzer2021DetectionOC,anguluri2019centralized,barboni2020detection,gallo2020distributed}, predominantly on distributed and decentralized architectures. Furthermore, there are however application based papers, such as \cite{salinas2016} addressing energy theft, where implicitly a hierarchical framework is used for cyber-attack detection.

On the other hand, here we focus on hierarchical architectures.
In hierarchical control, the computational advantages of distributed or decentralized controllers are blended with the coordination capability of a centralized, \textit{supervisory}, layer \cite{scattolini2009architectures}.
Hierarchical control indeed appears naturally in large-scale interconnected systems, as it is an architecture that allows for multiple degrees of complexity and coordination to be integrated.

In this paper, we propose a hierarchical cyber-attack detection scheme which leverages the physical coupling between local subsystems to detect attacks. By computing two estimates, at the local and supervisory level, sufficient redundancy is introduced to perform diagnoses. For this detection scheme:
\begin{enumerate}
    \item cyber-attacks fully compromising one or more local controllers can be detected at the supervisory level;
    \item the supervisory level requires only a reduced order representation of the subsystem dynamics for detection, allowing for reduced computational overhead;
    \item a thorough theoretical analysis of the detection properties is presented, including robustness, guaranteed attack detectability conditions, and existence conditions for locally and globally stealthy attacks.
\end{enumerate}
The use of physical coupling to generate the necessary redundancy for detection has been proven beneficial in distributed cyber-attack diagnosis schemes \cite{barboni2020detection,gallo2020distributed}, and indeed we show that it is a critical aspect of our proposed hierarchical diagnoser.
Here, subsystem model knowledge and measurements are used to define a local estimate of the physical coupling, which is then compared to a supervisory estimate computed from global knowledge.

We make use of a set-based detection scheme, which has a rich history as a detection method in the FDI literature, as can be seen in, e.g., \cite{boem2020distributed,NIKOUKHAH19981345,SCOTT20141580,SCOTT2016126,puig2010fault,rostampour2020privatized}, where this list does not have the pretence of being exhaustive. In this paper, but without loss of generality, we adopt constrained zonotopes \cite{SCOTT2016126},
which are proven to offer numerical advantages with respect to other set representations.

The rest of the paper is structured as follows: in Section~\ref{sec:probstatement} we introduce the problem formulation, describing the dynamics of the large-scale interconnected systems, and formalizing the attacks considered in this paper. Then, in Section~\ref{sec:hictrl}, we outline a hierarchical controller. In Section~\ref{sec:distest}, we present our proposed hierarchical scheme for estimating the physical interconnection between subsystems, and give the definition of our detection test. Following this, in Section~\ref{sec:detection}, we offer theoretical analysis for our proposed method. Finally, in Section~\ref{sec:simulation} we provide some numerical results, and in Section~\ref{sec:conclusion} we offer concluding remarks.

\paragraph*{Notation}
\noindent For a matrix $A$,  $\|A\|$ represents the induced Euclidian norm of $A$, and $\ker A$ its right null-space. $\col_{i\in\mathcal{I}}[x_i]$ and $\diag_{i\in\mathcal{I}}[x_i]$ denote respectively the column and block-diagonal concatenation of vectors or matrices $x_i~\forall~i\in\mathcal{I} \triangleq \{1,\dots,N\}$. Given matrices $A_{ij}, i,j \in \mathcal I$ of appropriate dimensions, $A = [A_{ij}]$ denotes the block matrix with $A_{ij}$ in the $i,j$-th block. For sets $\mathcal{A}\subseteq \mathbb{R}^n$ and $\mathcal{B} \subseteq \mathbb{R}^n$ we denote with $\oplus$ the Minkowski sum, $\mathcal A\oplus \mathcal B \triangleq \{x\in \mathbb{R}^n : x = a + b, \,\forall a \in \mathcal A, b\in \mathcal B\}$; with $\mathcal{A}\ominus\mathcal{B}$ the erosion, or Pontryagin difference, of $\mathcal{A}$ by $\mathcal{B}$, $A\ominus B\triangleq \{\zeta \in \mathbb{R}^n | \zeta + b \in A, \, \forall b \in \mathcal B\}$, and therefore $\mathcal{A}\ominus\mathcal{A}=0$ \cite{blanchini2008set}. The cartesian product of two sets $\mathcal A$ and $\mathcal B$ is defined as $\mathcal A \times \mathcal B$.
Additionally, for a set $\mathcal A$,  $\text{Vol}(\mathcal{A})$ denotes its volume and $\mathcal{A}'$ its set complement. Furthermore, constrained zonotope $\mathcal{Z}\subseteq \mathbb{R}^n$ is defined as ${\mathcal{Z}\triangleq\left\{\zeta\in\mathbb{R}^n:\zeta=c+G\beta,\,\|\beta\|_\infty\leq 1, A\beta=b\right\}}$. Efficient definitions of set operations with constrained zonotopes can be found in \cite{SCOTT2016126}. 
Finally, with $\bigotimes_{i\in\mathcal{I}} A_i$ we intend $A_1 \times A_2 \times \dots \times A_N$.

\section{Problem Statement}\label{sec:probstatement}
\noindent 
We consider a linear time-invariant \emph{Large-Scale System} (LSS) which is partitioned into $N$ physically coupled subsystems $\mathcal{S}_i, i \in \{1,\dots,N\} \triangleq \mathcal{N}$.
The dynamics of each subsystem is written as
\begin{equation}\label{eq:LSS}
    \begin{cases}
        x_i(k+1) = A_{ii} x_i(k) + B_i u_i(k) + d_i(k)\,,\\
        d_i(k) = \sum_{j\in\mathcal{N}_i} A_{ij}x_j(k)
    \end{cases}
\end{equation}
where $x_i \in \mathbb{R}^{n_i}, u_i\in\mathbb{R}^{m_i}$ are the subsystem state and control input.
The term $d_i(k)$ accounts for the physical coupling between subsystems, where $\mathcal{N}_i \triangleq \{j\in\{1,2,\dots,N\}: \partial {x}_i/\partial x_j \neq 0\} \subseteq \mathcal N$ is the set of \textit{neighbors} of $\mathcal S_i$, i.e., those subsystems which physically influence the dynamics of $\mathcal S_i$.
\begin{assumption}
For all $i \in \mathcal N$, $(A_{ii},B_i)$ is controllable.
$\hfill\triangleleft$
\end{assumption}
We suppose the LSS is regulated via a hierarchical control architecture, composed of two layers, as shown in Figure~\ref{fig:HierarchLSS}, with the following characteristics, detailed in Section~\ref{sec:hictrl}:
\begin{itemize}
    \item[--] locally, each subsystem is regulated by a decentralized controller $\mathcal{C}_i^\ell$.
    This guarantees LSS stability and is capable of tracking a suitably defined reference $r_i$;
    \item[--] at the supervisory level, a controller $\mathcal{C}^s$ is designed to provide appropriate references $r_i$ to the local controllers, thus providing coordination for the LSS.
\end{itemize}

\begin{figure}[b]
    \centering
    \includegraphics[width=0.8\linewidth,trim=1cm 2.5cm 0cm 4cm]{./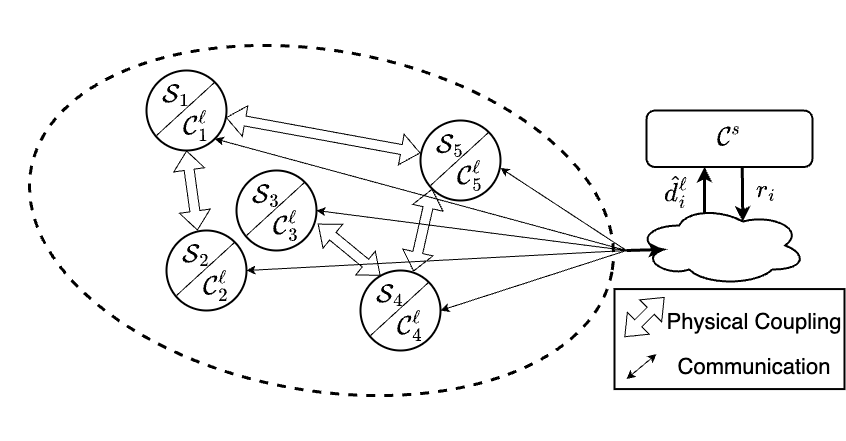}
    \caption{The considered hierarchical control and detection architecture.}
    \label{fig:HierarchLSS}
\end{figure}
\subsection{Cyber-attack vulnerability}
\noindent In this work, we consider a cyber-attack carried out by an agent capable of fully compromising a subset of the subsystems and their controllers. In order to clearly define the problem we address, we introduce the following assumption.
\begin{assumption}\label{ass:initAtk}
    An attacker may attack subsystems with indexes $\mathcal{I}_a \subseteq \mathcal N$ from some time $k_a \geq 0$.
    $\hfill\triangleleft$
\end{assumption}
The considered attack is modelled as $u_i = u_i^0 + a^u_i$ where $u_i^0$ is the healthy input as obtained by $\mathcal{C}^\ell_i$ and $a^u_{i}$ is, without loss of generality, an additive attack. 

The control architecture considered in this paper requires a communication network that links the supervisory controller $\mathcal C^s$ to all the local controllers $\mathcal C^\ell_i$. Thus, $\mathcal C^s$ represents a single point of failure, and if it were compromised by a malicious agent, it could steer the entire LSS to any desired operating condition. Given this premise, we suppose that the hardware and software of the supervisory controller are suitably designed to give a higher degree of protection, and therefore $\mathcal C^s$ cannot be subject to attacks.

\subsection{Hierarchical Cyber-Attack Detection}
\noindent
Let us, before introducing the hierarchical control architecture considered in this paper, briefly give an overview of our proposed method.
The detection architecture relies on comparing two estimates of $d_i$: one of the estimates is computed locally for each subsystem in $\mathcal N$, based on local model knowledge and measurements; a second estimate is computed at the supervisory level, using information on the references followed by each subsystem. During operation, the local estimate is transmitted to the supervisory level, where a detection test is performed.
Detection is then performed by comparing two sets bounding the nominal coupling $d_i$ given the local and supervisory estimates and their respective estimation errors. The proposed method allows these sets to be constructed at the supervisory level in a computationally efficiently way using only limited model knowledge of the LSS.  

\section{Hierarchical Control}\label{sec:hictrl}
\noindent 
Let us now describe the hierarchical controllers regulating the LSS. We stress again that this control architecture requires a communication network to exchange information between each local controller $\mathcal C_i^\ell$ and the supervisory controller $\mathcal C^s$, whilst not communicating amongst each other.
The control architecture is represented in Figure~\ref{fig:HierarchLSS}.

\subsection{Local controllers $\mathcal C_i^\ell$}\label{sec:hictrl:loc}
\noindent 
We consider a decentralized $\mathcal{C}_i^\ell$ designed to ensure stability of the LSS, while locally tracking a reference. The reference $r_i \in \mathbb{R}^{q_i}$ is set by the supervisory controller $\mathcal C^s$, and follows the dynamics:
\begin{equation}\label{eq:regDyn}
    r_i(k+1) = S_i r_i(k),
\end{equation}
with $S_i$ such that its eigenvalues have modulus no smaller than one \cite{francis1977linear}. Thus, defining the output tracking error $e_i \in \mathbb{R}^{p_i}$, $e_i = C_i x_i + Q_i r_i$, the controller's tracking objective is to define $u_i$ such that nominally, i.e., when the system is not under attack, $\lim_{k\rightarrow\infty} e_i(k) = 0$. Specifically, we focus on $\mathcal C_i^\ell$ capable of solving the full-information regulator problem \cite{francis1977linear}.

\begin{assumption}\label{ass:reg}
    For all $i \in \mathcal N$, the full state $x_i$ is measured by the controller $\mathcal C_i^\ell$ for regulation purposes.
    $\hfill\triangleleft$
\end{assumption}

\begin{remark}
    Assumption~\ref{ass:reg}, although potentially limiting, is introduced here to simplify the analysis of the controllers $C^\ell_i$, which is not the primary focus of this paper.
    $\hfill\triangleleft$
\end{remark}
\begin{assumption}\label{ass:kerC}
    For each subsystem $\mathcal S_i, i \in \mathcal N$, $\mathrm{ker} C_i \subseteq \mathrm{ker} A_{ji}$, for all $\mathcal S_j$ such that $i \in \mathcal N_j$.\
    $\hfill\triangleleft$
\end{assumption}
\noindent We consider the control law
\begin{equation}\label{eq:ctrl}
    \mathcal C_i^\ell : \quad
        u_i^0(k) = K_i x_i(k) + L_i r_i(k)\,,
\end{equation}
where $K_i$ and $L_i$ satisfy the output regulation problem \cite{francis1977linear}.
Specifically, $K_i$ is designed such that, while $r_i = 0, \forall i \in \mathcal N$, the closed-loop LSS dynamics is asymptotically stable (for methods to design $K_i$, see \cite{lunze1992feedback,siljak2011decentralized} and references therein). Before defining $L_i$, we introduce the following.
\begin{assumption}\label{ass:reg2}
    For each subsystem $i \in \mathcal N$, the matrices $A_i, B_i, C_i, Q_i, S_i$ are such that there exist $\Pi_i \in \mathbb{R}^{n_i\times q_i}$ and $\Gamma_i \in \mathbb{R}^{m_i\times q_i}$ such that:
    \begin{subequations}\label{eq:regEq}
        \begin{align}
            \Pi_iS_i  &= A_i \Pi_i + B_i \Gamma_i\label{eq:regEq:1}\\
            0 &= C_i \Pi_i + Q_i\label{eq:regEq:2}
        \end{align}
    \end{subequations}
    holds.
    $\hfill\triangleleft$
\end{assumption}
Assumption~\ref{ass:reg2} guarantees that the so-called regulator equations \eqref{eq:regEq} can be solved, and thus 
$L_i$ can be defined as
\begin{equation}
    L_i = \Gamma_i - K_i \Pi_i\,,
\end{equation}
which guarantees that $x_i \rightarrow \Pi_i r_i$ as $k\rightarrow \infty$ \cite{francis1977linear}, supposing $d_i = 0, \forall i \in \mathcal N$. Note that satisfaction of \eqref{eq:regEq:2} guarantees that $\epsilon_i\triangleq x_i - \Pi_i r_i = 0$ implies $C_i\epsilon_i=e_i = 0$.
On the other hand, for $d_i \neq 0$, we have that the dynamics of the state tracking error $\epsilon_i(k) = x_i(k) - \Pi_i r_i(k)$ are
\begin{equation}\label{eq:reg:track}
    \begin{split}
        &\epsilon_i(k+1) 
        = A_{ii} x_i(k) + B_i u_i^0(k) + d_i(k) - \Pi_iS_i r_i(k)\\
        &\overset{\eqref{eq:ctrl},\eqref{eq:regEq:1}}{=} A_{ii}^{cl} (x_i(k)-\Pi_i r_i(k))  + d_i(k)\,,
    \end{split}
\end{equation}
where $A_{ii}^{cl}\triangleq A_{ii}+B_iK_i$. Given the stability of $A_{ii}^{cl}$, by design of $K_i$, $\epsilon_i$ is bounded for bounded $d_i$. 

\subsection{Supervisory-level controller $\mathcal{C}^h$}
\noindent Having presented the local decentralized tracking controller $\mathcal{C}_i^\ell$, we can briefly discuss the design of $\mathcal{C}^s$.
As previously stated, the objective of $\mathcal{C}^s$ is to design $r_i, \forall i \in \mathcal{N}$ such that some level of coordination between subsystems is possible.
Although the specific design of $r_i$ is dependent on the type of application considered, and is out of the scope of this paper, we introduce some basic characteristics that must be included in its design.
We suppose that $\mathcal C^s$ defines $r_i$ as a piecewise constant signal. This in turn implies that the reference dynamics in \eqref{eq:regDyn} are $r_i(k+1) = r_i(k)$ for almost all $k$, and therefore that $S_i = I_{p_i}, \forall i \in \mathcal N$; furthermore, set  $Q_i = -I_{p_i}, \forall i \in \mathcal N$.
This definition of $S_i$ still allows for the references to be changed at discrete time instances. However, depending on the rate of convergence of $A_{ii}^{cl}$, it is important to specify a minimum time between switching times, and a maximum step in $r_i$ \cite{BARCELLI2011277}.

\section{Hierarchical Attack Detection}\label{sec:distest}
\noindent 
The hierarchical cyber-attack detection scheme presented in this paper uses the physical interconnection between subsystems to perform detection at the supervisory level. To this end two estimates of this physical interconnection are computed. The so-called \emph{local estimate} depends on local measurements as well as local model information, and is calculated at each local subsystem and communicated to the supervisory level. The so-called \emph{supervisory estimate} is calculated at the supervisory level and depends on knowledge of the interconnection and a simplified model of the local dynamics. These two estimates, along with their estimation uncertainties, are compared at the supervisory level to detect cyber-attacks using a set-based approach. 

\subsection{Supervisory Estimate}
\noindent 
The supervisory estimate of the physical interaction between all subsystems is computed as
\begin{equation}\label{eq:high_est}
    \hat{d}^s= A_c \Pi r\,,
\end{equation}
where $A_c = A - \diag_{i\in\mathcal N}[A_{ii}]$, $A = [A_{ij}]$, $\Pi = \diag_{i\in\mathcal N} [\Pi_i]$ and $r = \col_{i\in\mathcal N} [r_i]$.
Thus, the estimation error is \\
$e^s\triangleq d-\hat{d}^s = A_c \epsilon$, where $d = \col_{i\in\mathcal N} [d_i]$ and $\epsilon=\col_{i\in\mathcal N} [\epsilon_i]$. Thus, in nominal conditions the error can be bounded as $e^s \in \mathcal E^s$, with:
\begin{equation}\label{eq:sup_err_set}
    \mathcal E^s(k) = A_c \mathcal E(k),
\end{equation}
where $\mathcal E = \bigotimes_{i\in\mathcal N} \mathcal E_i$, and 
\begin{equation}\label{eq:normset}
    \mathcal E_i(k) \triangleq \{\zeta \in \mathbb{R}^{n} | \|\zeta\|\leq \bar{\epsilon}_i(k)\}\,,
\end{equation}
where 
$\bar{\epsilon}_i(k)$ guarantees $\|\epsilon_i(k)\| \leq \bar{\epsilon}_i(k), \forall k < k_a$, such that $\epsilon_i(k) \in \mathcal E_i(k), \forall k < k_a$: the trajectory of $\bar{\epsilon}_i$ is the result of the dynamics
\begin{equation}\label{eq:norm_update}
    \begin{split}
        \bar{\epsilon}_i(k+1) = b_i \bar{\epsilon}_i(k) + \|\hat{d}_i^s(k)\| + \|\mathcal E_i^s(k)\| \\
        + \|\Pi_i\|\|r_i(k+1) - r_i(k)\|\,,
    \end{split}
\end{equation}
which are defined bounding \eqref{eq:reg:track} via the triangle inequality, where $b_i \in [0,1]$ is defined such that $\|{A_{ii}^{cl}}^k\epsilon_i(0)\| \leq b_i^k \bar{\epsilon}_i^0$ for all $i \in \mathcal N$, $\bar{\epsilon}_i^0$ is an appropriately defined initial bound, $\hat{d}_i^s(k)$ are the components of $\hat{d}^s(k)$ relating to $\mathcal S_i$, and $\mathcal E_i^s$ is the projection of $\mathcal E$ onto the space relating to $\mathcal S_i$.
Furthermore, $\|r_i(k+1) - r_i(k)\|$ is added to bound the effect of reference changes.
By using a bound on the norm of $\epsilon_i$, only the rate of convergence $b_i$, is needed at the supervisory level. Then, via \eqref{eq:high_est}, \eqref{eq:sup_err_set} the \emph{supervisory estimation set} is defined
\begin{equation}\label{eq:sup_est_set}
    \mathcal{D}^s\triangleq\hat{d}^s\oplus\mathcal{E}^s\,.
\end{equation}

Thus, $d(k) \in \mathcal D^s$ holds by construction for all $k < k_a$.
Note that $\mathcal{D}^s$ only depends on $r$ and is therefore not affected by the considered cyber-attacks. Furthermore $\mathcal E_i$ is represented as a hyper-sphere in $n_i$-dimensions. As such sets are not closed under matrix multiplication as done in \eqref{eq:sup_err_set} we define a constrained zonotope $\bar{\mathcal E}_i$ that encloses $\mathcal{E}_i$ and use this for further calculation. An exact representation of $\mathcal{E}_i$ as a constrained zonotope requires a number of generators approaching infinity which is computationally infeasible, while a hyper-cube with $n_i$ generators may not be sufficiently accurate. Therefore, the representation of $\bar{\mathcal E}_i$ should be constructed to balance computational requirements with accuracy.
\begin{remark}
    Computing enclosing sets can be computationally intensive, however $\mathcal E_i$ is always a hyper-sphere centered in $0$.
    Thus, an enclosing set can be computed once off-line for the unit hyper-sphere, to only be re-scaled on-line.
    $\hfill\triangleleft$
\end{remark}

\subsection{Local Estimate}
\noindent 
The local estimator in each local subsystem is based on the reduced unknown input observer (R-UIO) by \cite{Lan2016}. Here, we exploit this R-UIO for the estimation of the physical coupling between local subsystems, by defining an extended system
\begin{equation}\label{eq:UIO_system}
\left\{\begin{aligned}
    \bar{x}_i(k+1) &= \bar{A}_{ii}\bar{x}_i(k) + \bar{B}_iu_i(k) +\bar{d}_i(k)\\
    y_i(k)&=\bar{C}_i\bar{x}_i(k)
    \end{aligned}\right.\\
\end{equation}
where $\bar{x}_i = \left[\begin{matrix}
    x_i\\d_i
    \end{matrix}\right]$, $\bar{d}_i(k)=\left[\begin{matrix}0\\d_i(k+1)-d_i(k)
    \end{matrix}\right]$,\\
$\bar{A}_{ii} = \left[\begin{matrix}
    A_{ii} & I\\ 0 & I
    \end{matrix}\right]$, $
    \bar{C}_i=\left[\begin{matrix}
    I&0
    \end{matrix}\right]$, $\bar{B}_i=\left[\begin{matrix}
   B_i\\0\end{matrix}\right]$.
Then, the R-UIO takes the form
\begin{equation}\label{eq:UIO}
\left\{\begin{aligned}
    \xi_i(k+1)&=M_i\xi_i(k) + G_iu_i(k) + R_iy_i(k)\,,\\
    \hat{d}_i^{\ell}(k) &= \xi_i(k) +H_iy_i(k) \,,
    \end{aligned}\right.
\end{equation}
where $\xi_i$ is the observer state, $\hat{d}_i^{\ell}$ is the local disturbance estimate, and $M_i$, $G_i$, $R_i$, and $H_i$ are designed such that
\begin{equation}\label{eq:RUIO:mat}
\begin{aligned}
    \left|\left|M_i\right|\right|<1\,,\\
    M_iT_i-R_i\bar{C}_i-T_i\bar{A}_i=0\,,\\
    T_i+H_i\bar{C}_i-L_i=0\,,\\
    G_i-T_i\bar{B}_i=0\,.
\end{aligned}
\end{equation}
An efficient design approach can be found in \cite{Lan2016}. 
Following definition of the matrices in \eqref{eq:RUIO:mat}, the dynamics of the local estimation error $e_i^\ell = \hat{d}_i^\ell - d_i$ can be written as
\begin{equation}\label{eq:est_err_low}
    e_i^{\ell}(k+1) = M_i e_i^\ell(k) - T_i \bar{d}_i(k)\,,
\end{equation}
which, given \eqref{eq:RUIO:mat}, converges to a neighborhood of the origin asymptotically. Let us define $e^{\ell}=\col_{i\in\mathcal{N}}[e^{\ell}_i]$ such that the stability of \eqref{eq:est_err_low} implies that $e^{\ell} \in \mathcal E^\ell$, where
\begin{equation}\label{eq:UIO_err_set}
    \mathcal{E}^{\ell}(k+1) = M \mathcal{E}^\ell(k)\oplus(- T)
    \bar{\mathcal{D}}(k)\,,
\end{equation}
where $M=\diag_{i\in\mathcal{N}}[M_i]$, $T=\diag_{i\in\mathcal{N}}[T_i]$ and $\bar{\mathcal{D}}(k)$ is defined as $\bar{\mathcal{D}}(k)=\mathcal{D}^s(k)\oplus(-{\mathcal{D}^s}(k+1))$. Note that \eqref{eq:UIO_err_set} is evaluated at the supervisory level to obtain $\mathcal{E}^{\ell}$. Thus, because $\mathcal{D}^s$ is known, it can be used to obtain $\bar{\mathcal{D}}$. Furthermore, it can be seen the supervisory level requires no knowledge of the local dynamics to obtain $\mathcal{E}^{\ell}$, although it requires knowledge of the R-UIO parameters $M$ and $T$. 

Note that $\hat{d}_i^\ell$ is calculated at the local level using \eqref{eq:UIO}, while detection is performed at the supervisory level. 
As such, it must be transmitted from the local to the supervisory level.
Therefore, given that the cyber-attack can fully compromise a local controller, the estimate $\hat{d}_i^\ell$ transmitted to the supervisory level might also be subject to a cyber-attack. 
To allow for this additional attack, we define the local estimate sent to the supervisory level as $\hat{d}_i^{\ell,a} = \hat{d}_i^\ell + a_i^d$. Based on \eqref{eq:UIO} and \eqref{eq:UIO_err_set}, and the additional cyber-attack we define the \emph{local estimation set} at the supervisory level as
 \begin{equation}\label{eq:loc_est_set}
     \mathcal{D}^\ell=\hat{d}^{\ell,a}\oplus\mathcal{E}^\ell\,,
 \end{equation}
 where $\hat d^{\ell,a}=\mbox{col}_{i\in\mathcal{N}}\hat d^{\ell,a}_i$.
\subsection{Cyber-Attack Detection Condition}
\noindent At the supervisory level cyber-attack detection will be performed using sets $\mathcal D^\ell$ and $\mathcal D^s$ as previously defined. By construction, for $k < k_a$, the physical coupling satisfies
\begin{equation}\label{eq:intersect}
    d \in \mathcal D^\ell \cap \mathcal D^s\triangleq \mathcal{D}\,.
\end{equation}
As $d$ itself is not known, the condition $d\notin\mathcal{D}$ cannot directly be used for cyber-attack detection. Alternatively, we can check the detection condition 
\begin{equation}\label{eq:detect_cond}
    \mathcal{D}=\emptyset\,,
\end{equation}
which implies $d\notin\mathcal{D}$. A summary of the scheme is given in Algorithms~\ref{alg:low} and~\ref{alg:high}.
\begin{remark}
All sets described in this section are defined as constrained zonotopes, which are closed under matrix multiplication, addition and intersection, thus justifying the equalities in \eqref{eq:sup_err_set}, \eqref{eq:UIO_err_set}, and \eqref{eq:intersect}.
$\hfill\triangleleft$
\end{remark}
\begin{algorithm}[ht]
\caption{Actions at Each Local Subsystem $i$}\label{alg:low}
\textbf{Initialize:} Determine $\hat{d}_i^\ell(0)$.
\begin{algorithmic}
\FORALL{$k\geq0$}
\STATE Compute $u_i(k)$ \eqref{eq:ctrl} and  $\hat{d}_i^\ell(k)$ \eqref{eq:UIO}.
\STATE Transmit $\hat{d}_i^{\ell,a}$ to the supervisory level.
\ENDFOR
\end{algorithmic}
\end{algorithm}\vspace{-0.5cm}
\begin{algorithm}[ht]
\caption{Actions at the Supervisory Level}\label{alg:high}
\textbf{Initialize:} Determine $\mathcal{E}^{s}(0)$ and $\mathcal{E}^\ell(0)$.
\begin{algorithmic}
\FORALL{$k\geq 0$}
\FORALL{$j \in \mathcal N$}
    \STATE Determine $r_j$,  
    \STATE Compute $\mathcal{E}_j$ (\ref{eq:normset}, \ref{eq:norm_update}).
    \STATE Enclose $\mathcal{E}_j$ by $\bar{\mathcal{E}}_j$.
\ENDFOR
    \STATE Receive $\hat{d}^{\ell,a}$ from local subsystems.
    \STATE Compute $\hat{d}^s$ \eqref{eq:high_est}, $\mathcal{D}^s$ (\ref{eq:sup_err_set}, \ref{eq:sup_est_set}),  $\mathcal{D}^{\ell}$ (\ref{eq:UIO_err_set}, \ref{eq:loc_est_set}), $\mathcal{D}$ \eqref{eq:intersect}.
    \STATE Check the Detection Condition \eqref{eq:detect_cond}.
\ENDFOR
\end{algorithmic}
\end{algorithm}

\section{Robustness and Detectability Analysis}\label{sec:detection}
\noindent
The properties of the proposed cyber-attack detection scheme are analysed based on changes in the local disturbance estimate $\hat{d}_i^{\ell,a}$, as it is the only part of the detection algorithm affected by the considered cyber-attacks. 
Let us introduce $\hat{d}_i^{\ell,0}$ and $\hat{d}^{\ell,0}=\mbox{col}_{i\in\mathcal{N}}\hat{d}_i^{\ell,0}$ as the nominal local estimate of the physical coupling, i.e. $\hat{d}^{\ell,0}(k)=\hat{d}^{\ell,a}(k)$ if $a_i^u(k) = a_i^d(k) = 0~\forall~k,i\in\mathcal{N}$. Accordingly, we define $\mathcal{D}^{\ell,0}\triangleq \hat{d}^{\ell,0}\oplus\mathcal{E}^\ell$.
Similarly, we define $x_i^a$ as the part of the state driven by an attack $a_i^u$, satisfying dynamics: $x_i^a(k+1) = A_{ii}^{cl}x_i^a(k) + B_i a_i^u(k)$. Because of superposition in linear systems, $x_i$ can be written as $x_i = x_i^0 + x_i^a$, where $x_i^0$ is the nominal state.

\begin{theorem}\label{thm:robustness}
It can be guaranteed no false detection occurs, i.e. $\mathcal{D}^s\cap \mathcal{D}^\ell\neq\emptyset$ for all $k\leq k_a$.
$\hfill\square$
\end{theorem}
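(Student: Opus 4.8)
The plan is to establish the stronger claim that the true physical coupling satisfies $d(k)\in\mathcal D^s(k)\cap\mathcal D^\ell(k)$ for every $k\le k_a$. Non-emptiness of $\mathcal D(k)=\mathcal D^s(k)\cap\mathcal D^\ell(k)$ is then immediate, so the detection condition \eqref{eq:detect_cond} is not met before an attack takes effect, i.e. no false alarm is raised. The argument therefore splits into the two inclusions $d(k)\in\mathcal D^s(k)$ and $d(k)\in\mathcal D^\ell(k)$.

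For the supervisory inclusion, I would exploit that $\hat d^s=A_c\Pi r$ is a function of the references only, hence unaffected by $a^u_i$ and $a^d_i$, and that the residual is $e^s=d-\hat d^s=A_c\epsilon$. It thus suffices to show $\epsilon_i(k)\in\mathcal E_i(k)$ while the state is still nominal, i.e. for $k\le k_a$. This follows by induction on \eqref{eq:reg:track}: with $\bar\epsilon^0_i\ge\|\epsilon_i(0)\|$ and $b_i$ chosen so that $\|(A^{cl}_{ii})^k\epsilon_i(0)\|\le b^k_i\bar\epsilon^0_i$, the scalar recursion \eqref{eq:norm_update} dominates \eqref{eq:reg:track} term by term through the triangle inequality, giving $\|\epsilon_i(k)\|\le\bar\epsilon_i(k)$ and hence $\epsilon_i(k)\in\mathcal E_i(k)$. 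Therefore $e^s(k)=A_c\epsilon(k)\in A_c\mathcal E(k)=\mathcal E^s(k)$, so $d(k)=\hat d^s(k)+e^s(k)\in\hat d^s(k)\oplus\mathcal E^s(k)=\mathcal D^s(k)$.

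For the local inclusion, note first that for $k\le k_a$ neither $a^u_i$ nor $a^d_i$ has yet altered the transmitted quantity, so $\hat d^{\ell,a}(k)=\hat d^\ell(k)=\hat d^{\ell,0}(k)$. It then remains to show $e^\ell(k)\in\mathcal E^\ell(k)$, i.e. that the recursion \eqref{eq:UIO_err_set} produces a valid outer approximation of the R-UIO error \eqref{eq:est_err_low}. I would prove this by induction: the base case is the initialization $e^\ell(0)\in\mathcal E^\ell(0)$; in the step, the inclusion from the previous paragraph — legitimate because $\mathcal D^s$ is attack-independent — gives $d(k)\in\mathcal D^s(k)$ and $d(k+1)\in\mathcal D^s(k+1)$, so the coupling increment $\bar d(k)$ lies in (the zero-padded extended version of) $\bar{\mathcal D}(k)=\mathcal D^s(k)\oplus(-\mathcal D^s(k+1))$; combining with $e^\ell(k)\in\mathcal E^\ell(k)$ yields $e^\ell(k+1)=Me^\ell(k)-T\bar d(k)\in M\mathcal E^\ell(k)\oplus(-T)\bar{\mathcal D}(k)=\mathcal E^\ell(k+1)$. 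Consequently $d(k)=\hat d^\ell(k)-e^\ell(k)\in\hat d^{\ell,a}(k)\oplus\mathcal E^\ell(k)=\mathcal D^\ell(k)$. Putting the two inclusions together, $d(k)\in\mathcal D(k)$, which gives $\mathcal D^s(k)\cap\mathcal D^\ell(k)\ne\emptyset$.

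The step I expect to require the most care is the induction showing $e^\ell(k)\in\mathcal E^\ell(k)$, because the set recursion \eqref{eq:UIO_err_set} is itself expressed through $\bar{\mathcal D}$, hence through $\mathcal D^s$: one must make explicit that there is no circularity, namely that $\mathcal D^s$ is a deterministic function of $r$ alone which, by the supervisory inclusion, genuinely contains $d$ and may therefore serve as the outer bound on the coupling increment driving the error dynamics. The remaining effort is bookkeeping: handling the endpoint $k=k_a$ carefully (arguing that the actuator attack acts on the state with one step of delay and the channel attack is not yet active, or equivalently reading the window as $k<k_a$), keeping the signs — and, where relevant, the symmetry — of the Minkowski operations consistent so that $d=\hat d^\ell-e^\ell\in\hat d^\ell\oplus\mathcal E^\ell$ is valid, and checking that the constrained-zonotope operations invoked are exactly those listed in the Remark following \eqref{eq:intersect}.
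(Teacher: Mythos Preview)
Your proposal is correct and follows exactly the same approach as the paper: show that in nominal conditions $d\in\mathcal D^s$ and $d\in\mathcal D^\ell$, hence $d\in\mathcal D^s\cap\mathcal D^\ell\neq\emptyset$. The paper's proof is a two-line appeal to ``by construction''; you have simply unpacked what that construction entails, which is more detailed but not a different route.
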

\begin{proof}
The sets $\mathcal{D}^s$ and $\mathcal{D}^\ell$ are such that, in nominal conditions, $d\in\mathcal{D}^{s}$ and $d\in\mathcal{D}^\ell$ hold by construction.
Therefore, $d\in\mathcal{D}^s\cap \mathcal{D}^\ell\neq\emptyset$ for all $k\leq k_a$
\end{proof}
\begin{theorem}\label{thm:detectability}
Any attack for which $\hat{d}^{\ell,a}-\hat{d}^{\ell,0}\in \mathcal{D}^{s'}\ominus\mathcal{D}^{\ell,0}$ is guaranteed to be detected.
$\hfill\square$
\end{theorem}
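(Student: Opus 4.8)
The plan is to show that the detection condition \eqref{eq:detect_cond} is triggered, i.e. that $\mathcal{D} = \mathcal{D}^s \cap \mathcal{D}^\ell = \emptyset$, whenever the attack produces a perturbation $\delta \triangleq \hat{d}^{\ell,a} - \hat{d}^{\ell,0}$ lying in $\mathcal{D}^{s'} \ominus \mathcal{D}^{\ell,0}$. First I would write $\mathcal{D}^\ell = \hat{d}^{\ell,a} \oplus \mathcal{E}^\ell$ and use the decomposition $\hat{d}^{\ell,a} = \hat{d}^{\ell,0} + \delta$, so that $\mathcal{D}^\ell = (\hat{d}^{\ell,0} \oplus \mathcal{E}^\ell) \oplus \delta = \mathcal{D}^{\ell,0} \oplus \delta$, using the definition $\mathcal{D}^{\ell,0} \triangleq \hat{d}^{\ell,0} \oplus \mathcal{E}^\ell$ from the paragraph preceding the theorem. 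This reduces the claim to: if $\delta \in \mathcal{D}^{s'} \ominus \mathcal{D}^{\ell,0}$, then $(\mathcal{D}^{\ell,0} \oplus \delta) \cap \mathcal{D}^s = \emptyset$.

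The key step is to unwind the definition of the Pontryagin difference as given in the Notation section: $\delta \in \mathcal{D}^{s'} \ominus \mathcal{D}^{\ell,0}$ means $\delta + b \in \mathcal{D}^{s'}$ for all $b \in \mathcal{D}^{\ell,0}$, i.e. every point of $\delta \oplus \mathcal{D}^{\ell,0}$ lies in the complement of $\mathcal{D}^s$. Equivalently, $(\delta \oplus \mathcal{D}^{\ell,0}) \cap \mathcal{D}^s = \emptyset$. Combining with the first step, $\mathcal{D}^\ell \cap \mathcal{D}^s = (\mathcal{D}^{\ell,0} \oplus \delta) \cap \mathcal{D}^s = \emptyset$, which is exactly the detection condition \eqref{eq:detect_cond}. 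Hence the attack is detected at the supervisory level.

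The main obstacle — really the only subtlety — is making sure the perturbation $\delta$ enters $\mathcal{D}^\ell$ as a pure translation and does not also alter $\mathcal{E}^\ell$ or $\mathcal{D}^s$. For $\mathcal{D}^s$ this is immediate: by the remark following \eqref{eq:sup_est_set}, $\mathcal{D}^s$ depends only on $r$ and is unaffected by the attacks. For $\mathcal{E}^\ell$ one must check that its recursion \eqref{eq:UIO_err_set} is driven only by $\bar{\mathcal{D}}$, which is again a function of $\mathcal{D}^s$ alone, so the attack-induced change in the transmitted estimate shifts the center $\hat{d}^{\ell,a}$ but leaves the uncertainty set $\mathcal{E}^\ell$ intact; this justifies treating $\hat{d}^{\ell,a} - \hat{d}^{\ell,0}$ as a rigid translation of $\mathcal{D}^{\ell,0}$. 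I would state this observation explicitly before invoking the set-algebra identity. With that in place the argument is a two-line manipulation of Minkowski sums and the definition of $\ominus$, and no properties of constrained zonotopes beyond closure under the operations already invoked in the paper are needed.
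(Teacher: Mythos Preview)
Your proposal is correct and follows essentially the same route as the paper's proof: both reduce detection to the set-algebra fact that $\delta\in\mathcal{D}^{s'}\ominus\mathcal{D}^{\ell,0}$ implies $(\mathcal{D}^{\ell,0}\oplus\delta)\cap\mathcal{D}^s=\emptyset$, and then identify $\mathcal{D}^\ell$ with the translate $\mathcal{D}^{\ell,0}\oplus\delta$. Your write-up is in fact more careful than the paper's two-line argument---you make the translation $\mathcal{D}^\ell=\mathcal{D}^{\ell,0}\oplus\delta$ explicit and justify why $\mathcal{E}^\ell$ and $\mathcal{D}^s$ are attack-invariant, whereas the paper states the set inclusion $(\mathcal{D}^{\ell,0}\oplus\mathcal{A})\subseteq\mathcal{D}^{s'}$ (written there as an equality) and leaves the rest implicit.
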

\begin{proof}
Define $\mathcal{A}=\mathcal{D}^{s'}\ominus\mathcal{D}^{\ell,0}$, such that $\mathcal{D}^{\ell,0}\oplus\mathcal{A}=\mathcal{D}^{s'}$, which implies $(\mathcal{D}^{\ell,0} \oplus \mathcal{A})\cap\mathcal{D}^{s}=\emptyset$ \cite{blanchini2008set}. This proves that an attack is detected if $\hat{d}^{\ell,a}-\hat{d}^{\ell,0}\in\mathcal{A}$.
\end{proof}
\noindent To analyze existence conditions for stealthy attacks, let us introduce the following definitions.
\begin{definition}[Locally Stealthy]
A cyber-attack $\{a_i^u\neq0,a_i^d\neq0\}$ is locally stealthy if $\hat{d}_i^{\ell,a}(k)=\hat{d}_i^{\ell,0}(k)~\forall k\geq 0$.
$\hfill\triangleleft$
\end{definition}
\begin{definition}[Globally Stealthy]
A cyber-attack $\{a_i^u\neq0,a_i^d\neq0\}$ is globally stealthy if it is locally stealthy and $\hat{d}_j^{\ell,a}(k)=\hat{d}_j^{\ell,0}(k)~\forall k,j~\text{s.t.}~i\in\mathcal{N}_j$.
$\hfill\triangleleft$
\end{definition}
\begin{theorem}\label{thm:loc_stealthy}
For all $i \in \mathcal I_a$, there exists $a^u_i\neq0$ and $a^d_i\neq0$ for which $\hat{d}_i^{\ell,a} = \hat{d}_i^{\ell,0}$, i.e. 
the attack is locally stealthy.
$\hfill\square$
\end{theorem}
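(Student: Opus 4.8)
The plan is to construct an explicit pair $(a_i^u, a_i^d)$ that cancels out in the quantity $\hat d_i^{\ell,a}$, exploiting the fact that the attacker controls both the actuation channel (through $a_i^u$) and the transmitted estimate (through $a_i^d$). First I would track how an actuation attack $a_i^u$ propagates into the locally computed estimate $\hat d_i^\ell$. Using the superposition decomposition $x_i = x_i^0 + x_i^a$ introduced just before the theorem, with $x_i^a(k+1) = A_{ii}^{cl} x_i^a(k) + B_i a_i^u(k)$, and feeding this through the R-UIO dynamics \eqref{eq:UIO} (noting that $u_i = u_i^0 + a_i^u$ and $y_i = \bar C_i \bar x_i$ both inherit an additive attack component), I would isolate the attack-induced perturbation $\delta\hat d_i^\ell(k) \triangleq \hat d_i^\ell(k) - \hat d_i^{\ell,0}(k)$. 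By linearity this $\delta\hat d_i^\ell$ is the output of a known linear system driven by $a_i^u$, with zero initial condition at $k_a$.

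The key step is then to choose $a_i^d(k) \triangleq -\,\delta\hat d_i^\ell(k)$ for all $k \geq k_a$ (and $0$ before). With this choice, $\hat d_i^{\ell,a}(k) = \hat d_i^\ell(k) + a_i^d(k) = \hat d_i^{\ell,0}(k) + \delta\hat d_i^\ell(k) - \delta\hat d_i^\ell(k) = \hat d_i^{\ell,0}(k)$, which is exactly the local stealthiness condition. It remains only to exhibit $a_i^u \neq 0$ (the statement requires both attack components nonzero): since $(A_{ii}, B_i)$ is controllable by Assumption~1, and $A_{ii}^{cl}$ is a stable closed-loop matrix, any nonzero input $a_i^u$ produces a nonzero $x_i^a$ trajectory in general; picking, e.g., a single nonzero pulse $a_i^u(k_a) \neq 0$ suffices, and then $\delta\hat d_i^\ell$ is generically nonzero, forcing $a_i^d \neq 0$ as well. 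One can also simply argue that \emph{any} $a_i^u\neq 0$ works together with the matched $a_i^d = -\delta\hat d_i^\ell$, and that this $a_i^d$ is itself nonzero unless the R-UIO happens to be completely insensitive to $a_i^u$, a degenerate case excluded by the observer design \eqref{eq:RUIO:mat}.

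The main obstacle I anticipate is bookkeeping rather than conceptual: one must be careful that $a_i^u$ enters the R-UIO both through $G_i u_i(k)$ and implicitly through $R_i y_i(k)$ and $H_i y_i(k)$ via the perturbed state, and verify that $\delta\hat d_i^\ell$ depends \emph{only} on quantities locally reconstructable by the attacker (it does, since the attacker sits inside subsystem $i$ and knows $x_i^a$, the R-UIO matrices, and its own injected $a_i^u$). A secondary subtlety is ensuring the argument does not require the attacker to know $d_i$ itself — and indeed it does not, because $\delta\hat d_i^\ell$ is the \emph{difference} between attacked and nominal runs, in which the common coupling term $d_i$ cancels. No optimization or set computation is needed; the result is essentially a controllability-plus-linearity observation, so the proof should be short.
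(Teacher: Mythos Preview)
Your construction is sound: setting $a_i^d(k) \triangleq -\bigl(\hat d_i^\ell(k) - \hat d_i^{\ell,0}(k)\bigr)$ yields $\hat d_i^{\ell,a} = \hat d_i^{\ell,0}$ by definition, and exhibiting a nonzero $a_i^u$ that produces a nonzero perturbation $\delta\hat d_i^\ell$ is straightforward from linearity and the non-degeneracy of the R-UIO. Note, however, that the paper \emph{omits} the proofs of Theorems~\ref{thm:loc_stealthy} and~\ref{thm:glob_stealthy} for space reasons, so there is no reference argument to compare against; yours is the natural route for a pure existence claim.

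One caveat on a side remark in your plan: you assert that ``the common coupling term $d_i$ cancels'' in $\delta\hat d_i^\ell$, so that the attacker can reconstruct it from purely local information. This is not quite right in general. The injection $a_i^u$ perturbs $x_i$, which perturbs $d_j$ for every $j$ with $i\in\mathcal N_j$, which perturbs $x_j$, and this can feed back into $d_i = \sum_{j\in\mathcal N_i} A_{ij} x_j$. Hence the attacked and nominal couplings into $\mathcal S_i$ need not coincide, and $x_i - x_i^0$ is not exactly the locally simulated $x_i^a$ defined just before the theorem. This does \emph{not} damage the existence statement you are proving --- the required $a_i^d$ still exists --- but it does mean that implementing it may require non-local knowledge, which is precisely the observation made in the Remark following Theorem~\ref{thm:glob_stealthy}. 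You may want to soften that paragraph accordingly, or simply drop it, since the theorem asks only for existence.
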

\begin{proof}
For sake of space, the proofs of Theorem~\ref{thm:loc_stealthy} and Theorem~\ref{thm:glob_stealthy} are omitted.
\end{proof}

\begin{theorem}\label{thm:glob_stealthy}
There exists a globally stealthy attack if and only if $a_i^u$ and $a_i^d$ satisfy Theorem~\ref{thm:loc_stealthy}, and $a_i^u$ is such that $x_i^a \in \mathrm{ker}A_{ji}$, for all $j \in \{l\in\mathcal N: i \in \mathcal N_l\}\triangleq \widehat{\mathcal N}_i$, $i \in \mathcal I_a$.
$\hfill\square$
\end{theorem}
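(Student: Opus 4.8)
The plan is to run the whole argument by superposition. Since the dynamics are linear, write every signal as nominal plus attack-induced part (as already done for $x_i=x_i^0+x_i^a$), and in particular split the transmitted local estimates as $\hat d^{\ell,a}=\hat d^{\ell,0}+\delta\hat d^{\ell}$ with $\delta\hat d^{\ell}\triangleq\hat d^{\ell,a}-\hat d^{\ell,0}$. For the attacked subsystems $i\in\mathcal I_a$, local stealthiness is by definition $\delta\hat d_i^\ell\equiv0$, which is exactly the hypothesis that $a_i^u,a_i^d$ satisfy Theorem~\ref{thm:loc_stealthy}; so the only thing left to characterise is $\delta\hat d_j^\ell$ at the neighbours $j\in\widehat{\mathcal N}_i$, which (being unattacked for a single attacked subsystem, and in general also required to vanish there) carries the extra content of the statement.

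The first step is to relate $\delta\hat d_j^\ell$ to the coupling perturbation $\delta d_j(k)=\sum_{m\in\mathcal N_j}A_{jm}x_m^a(k)$. Unfolding the R-UIO error recursion \eqref{eq:est_err_low} together with \eqref{eq:UIO_system} and \eqref{eq:RUIO:mat} — which give $T_j\bar d_j(k)=d_j(k+1)-d_j(k)$ — one obtains that $\delta\hat d_j^\ell$ is the response, from zero initial condition at $k_a$, of the stable first-order filter $\delta\hat d_j^\ell(k+1)=M_j\,\delta\hat d_j^\ell(k)+(I-M_j)\,\delta d_j(k)$. Because $\|M_j\|<1$ by \eqref{eq:RUIO:mat}, $I-M_j$ is nonsingular, so $\delta\hat d_j^\ell\equiv0\iff\delta d_j\equiv0$; thus global stealthiness is equivalent to local stealthiness of the attacks on $\mathcal I_a$ together with $\delta d_j\equiv0$ for all $j\in\widehat{\mathcal N}_i$, $i\in\mathcal I_a$.

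It then remains to translate ``$\delta d_j\equiv0$ at the neighbours'' into the stated kernel condition. For the ``if'' direction, if $x_i^a\in\ker A_{ji}$ for every $j\in\widehat{\mathcal N}_i$ and every $i\in\mathcal I_a$, the attack leaves every $d_j$, hence every unattacked $x_m^a$, unperturbed (the perturbation is confined to $\mathcal I_a$), so $\delta d_j=\sum_{i\in\mathcal I_a\cap\mathcal N_j}A_{ji}x_i^a\equiv0$ and the attack is globally stealthy. For the converse, $\delta d_j\equiv0$ at the neighbours makes their perturbation dynamics $x_j^a(k+1)=A_{jj}^{cl}x_j^a(k)$ with zero initial condition, so $x_j^a\equiv0$ there, and propagating this along the interconnection (a stable autonomous system driven from zero) shows $x_m^a\equiv0$ for every unattacked $m$; hence $\delta d_j=\sum_{i\in\mathcal I_a\cap\mathcal N_j}A_{ji}x_i^a\equiv0$, which for a single attacked subsystem yields $A_{ji}x_i^a\equiv0$, i.e. $x_i^a\in\ker A_{ji}$. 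To see such attacks exist non-trivially, invoke Assumption~\ref{ass:kerC}: any $a_i^u\neq0$ keeping $x_i^a$ in the zero-dynamics subspace of $(A_{ii}^{cl},B_i,C_i)$ has $x_i^a\in\ker C_i\subseteq\ker A_{ji}$, while local stealthiness is always attainable by cancelling $\delta\hat d_i^\ell$ with $a_i^d$.

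The main obstacle is the converse direction at full generality: when several neighbours of a given $j$ are attacked, $\delta d_j\equiv0$ only forces $\sum_{i\in\mathcal I_a\cap\mathcal N_j}A_{ji}x_i^a\equiv0$, and attributing this cancellation to each individual $x_i^a\in\ker A_{ji}$ requires either reading the theorem subsystem-by-subsystem (a single attacked $\mathcal S_i$, as the notation $\{a_i^u,a_i^d\}$ suggests) or a mild non-cancellation hypothesis on the images of the $A_{ji}$; once past that point, the filter identity and the invertibility of $I-M_j$ reduce everything to the elementary propagation argument above.
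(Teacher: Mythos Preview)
The paper omits the proof of Theorem~\ref{thm:glob_stealthy} entirely (``For sake of space, the proofs of Theorem~\ref{thm:loc_stealthy} and Theorem~\ref{thm:glob_stealthy} are omitted''), so there is nothing to compare your proposal against; I can only assess it on its own merits.

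Your superposition argument is sound. The key identity $T_j\bar d_j(k)=d_j(k+1)-d_j(k)$ indeed follows from \eqref{eq:RUIO:mat}: with $\bar C_j=[I\;0]$ and the selection $L_j=[0\;I]$ implicit in $e_j^\ell=\hat d_j^\ell-d_j$, one gets $T_j=[-H_j\;I]$, whence the claimed filter $\delta\hat d_j^\ell(k+1)=M_j\,\delta\hat d_j^\ell(k)+(I-M_j)\,\delta d_j(k)$ for the unattacked neighbours, started from zero at $k_a$. Invertibility of $I-M_j$ (from $\|M_j\|<1$) then gives the clean equivalence $\delta\hat d_j^\ell\equiv0\iff\delta d_j\equiv0$, and your causal induction showing $x_m^a\equiv0$ on every unattacked node is the right way to close the loop. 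The appeal to Assumption~\ref{ass:kerC} for non-trivial existence is also appropriate and in the spirit of the paper.

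The one genuine caveat is the one you yourself flag: when $|\mathcal I_a\cap\mathcal N_j|>1$, global stealthiness forces only $\sum_{i\in\mathcal I_a\cap\mathcal N_j}A_{ji}x_i^a\equiv0$, not each $A_{ji}x_i^a\equiv0$. So the ``only if'' direction, as literally stated in the theorem with the per-subsystem condition $x_i^a\in\ker A_{ji}$, does not hold without reading the result for a single attacked subsystem (which the singular-$i$ phrasing in Definitions~1--2 and the notation $\{a_i^u,a_i^d\}$ support) or adding a structural assumption precluding cross-cancellation. This is a limitation of the theorem statement rather than of your argument, and you handle it correctly.
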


\begin{remark}
Although the construction of locally and globally stealthy attacks is not given, we note that to perform such attacks, malicious agents require information relating to more than only the attacked subsystems. Indeed, to be locally stealthy, $a_i^d$ must be designed to consider the effect that $x_i^a$ has on the neighbors of $i \in \mathcal I_a$ through the physical coupling. This is different to other definitions of locally stealthy attacks available in literature \cite{barboni2020detection}, where information about the local subsystem dynamics is sufficient to perform such attacks.
$\hfill\triangleleft$
\end{remark}

\section{Simulation Example}\label{sec:simulation}
\noindent To demonstrate the effectiveness of the proposed hierarchical detection scheme, we apply it to a system with four subsystems that are physically connected in series. The system is discretized with a time-step of $0.25~[s]$ and modelled by \eqref{eq:LSS} and \eqref{eq:ctrl} using the following parameters $\forall i\in \mathcal{N}$
\begin{equation*}
\begin{aligned}
    \allowdisplaybreaks
    &A_{ii} =\left[\begin{matrix} 1&0.25\\-0.055 & 0.995\end{matrix}\right];B_i=\left[\begin{matrix} 0 \\ 0.05\end{matrix}\right] ~;~ K_i = \left[\begin{matrix} -0.284 \\ -2.100\end{matrix}\right]^\top\\
    &L_i = 1.484~;~A_{ij} =\left[\begin{matrix} 0.005&0\\0 & 0\end{matrix}\right] \forall j\in \{i-1,i+1\}\cap \mathcal{N}
 \end{aligned}
\end{equation*}
Cyber-attack detection is performed using Algorithms \ref{alg:low} and \ref{alg:high}. Here the parameters for R-UIO \eqref{eq:UIO} are found using \cite{Lan2016}. Furthermore, $b_i$ in \eqref{eq:norm_update} is chosen as $0.955$.
\begin{figure}[t]
    \centering
    {\includegraphics[width=\linewidth,trim={1.8cm 10.2cm 2.4cm 11.2cm},clip]{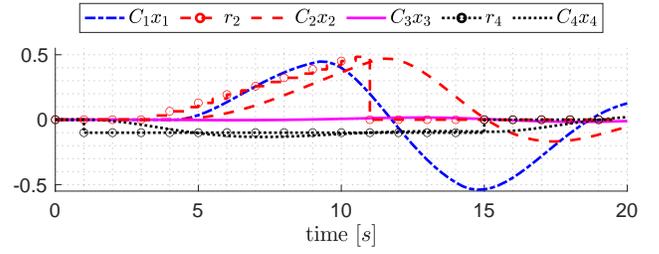}}
    \caption{Reference and Tracking performance of all subsystems.}
    \label{fig:refout}
\end{figure}
\begin{figure}[t]
    \centering
    \includegraphics[width=\linewidth,trim={1.2cm 11.3cm 1.2cm 11.4cm},clip]{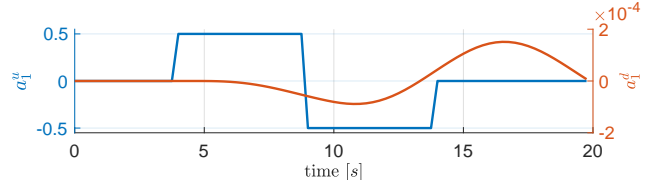}
    \caption{The implemented attack, where $a^d_1$ makes the attack locally stealthy.}
    \label{fig:attack}
\end{figure}
The references $r_i$ that are non-zero are shown in Figure \ref{fig:refout}. The system is corrupted by an attacker in subsystem $1$ as shown in Figure~\ref{fig:attack}. Here $a^u_i$ disturbs the system and $a^d_i$ is designed to make the attack locally stealthy (see Theorem~\ref{thm:loc_stealthy}).

Figure \ref{fig:refout} also shows the reference tracking performance of all subsystems. One can see that all systems show reasonable tracking except the attacked subsystem $1$. The detection of attack $a^u_1$ is shown as red shade in Figure \ref{fig:detection}. Additionally, a metric is introduced which gives more insight into how close a detection is. This metric is denoted $i_f$ and is defined as
\begin{equation*}
    i_f = \frac{\text{Vol}(\mathcal{D})}{\min(\text{Vol}(\mathcal{D}^s),\text{Vol}(\mathcal{D}^\ell))}\,,
\end{equation*}
such that $i_f=1$ if set $\mathcal{D}^\ell$ encloses set $\mathcal{D}^s$ or vice versa, and $i_f=0$ if the sets do not intersect.

Figure \ref{fig:detection} shows an estimate of $i_f$ obtained using a Monte Carlo method using 1000 random samples of $\mathcal{D}^\ell$ and $\mathcal{D}^s$. This estimate is used as obtaining the exact volume of a constrained zonotope is computationally hard. One can see that $i_f$ is normally $1$ and becomes smaller than $1$ around times of detection of attack $a^u_1$ and momentarily around $1$ and $3~[s]$ due to excitation of the system. The temporary lack of detection at $12-14~[s]$ is caused by the change of sign of the attack at $8 [s]$, which causes the system to temporarily be in a condition that resembles nominal behaviour.

Figure \ref{fig:setplot} shows a projection of the sets $\mathcal{D}^\ell$ and $\mathcal{D}^s$ on the axes representing $d_2$ and $d_4$. Due to the attack local estimation set $\mathcal{D}^\ell$ moves while the supervisory estimation set $\mathcal{D}^s$ remains unaffected. The projections shown in Figure \ref{fig:setplot} are used here to illustrate the detection method, and any use of these projections for identification has not been studied.

\section{Conclusion}\label{sec:conclusion}
\noindent 
Hierarchical control architectures are commonly used in industrial control systems, however very little work exists on fault or cyber-attack detection schemes with the same architecture. In this paper a hierarchical cyber-attack detection scheme has been presented which utilizes two estimates of the physical coupling between local subsystem. These are compared at the supervisory level to detect cyber-attacks on the local subsystems. The redundancy of information contained in these two estimates can be used for detection. Indeed, the local estimate uses local model knowledge and measurements to estimate how each subsystem is affected by the coupling; on the other hand, the supervisory estimate uses knowledge of the reference tracked by each subsystem to estimate how each subsystem affects its neighbours. A set based method using constrained zonotope representation is presented to calculate the estimation errors, which are compared for cyber-attack detection. It is proven that the detection method is robust, and existence conditions for guaranteed detectability, and locally and globally stealthy attacks are given. In future work, we intend to extend the proposed detection scheme to multi-rate systems.
\begin{figure}[t]
    \centering
    \includegraphics[width=\linewidth,trim={1.8cm 11.3cm 2.4cm 11.3cm},clip]{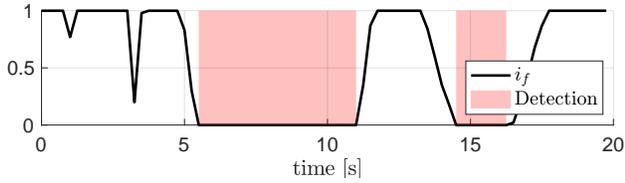}
    \caption{Detection times and fraction of sets $\mathcal{D}^h$ and $\mathcal{D}^\ell$ that are intersecting.}
    \label{fig:detection}
\end{figure}
\begin{figure}[t]
    \centering
    \includegraphics[width=\linewidth,trim={2cm 11cm 2cm 10.9cm},clip]{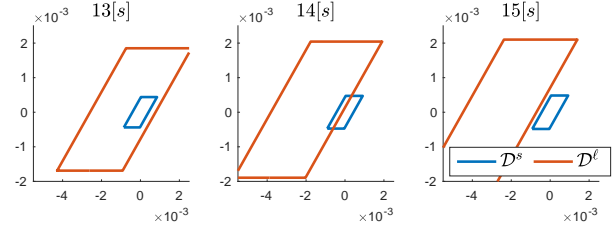}
    \caption{Visualization of the set based attack detection using a projection to show the estimate sets for $\mathcal S_i, i \in \{2,4\}$.}
    \label{fig:setplot}
\end{figure}

\bibliographystyle{IEEEtran}
\bibliography{biblio}

\begin{thebibliography}{10}
\providecommand{\url}[1]{#1}
\csname url@rmstyle\endcsname
\providecommand{\newblock}{\relax}
\providecommand{\bibinfo}[2]{#2}
\providecommand\BIBentrySTDinterwordspacing{\spaceskip=0pt\relax}
\providecommand\BIBentryALTinterwordstretchfactor{4}
\providecommand\BIBentryALTinterwordspacing{\spaceskip=\fontdimen2\font plus
\BIBentryALTinterwordstretchfactor\fontdimen3\font minus
  \fontdimen4\font\relax}
\providecommand\BIBforeignlanguage[2]{{%
\expandafter\ifx\csname l@#1\endcsname\relax
\typeout{** WARNING: IEEEtran.bst: No hyphenation pattern has been}%
\typeout{** loaded for the language `#1'. Using the pattern for}%
\typeout{** the default language instead.}%
\else
\language=\csname l@#1\endcsname
\fi
#2}}

\bibitem{baheti2011cyber}
R.~Baheti and H.~Gill, ``Cyber-physical systems,'' \emph{The impact of control
  technology}, vol.~12, no.~1, pp. 161--166, 2011.

\bibitem{lee2008cyber}
E.~A. Lee, ``Cyber physical systems: Design challenges,'' in \emph{ISORC 2008},
  2008, pp. 363--369.

\bibitem{falliere2011w32}
N.~Falliere, L.~O. Murchu, and E.~Chien, ``W32. stuxnet dossier,'' \emph{White
  paper, Symantec Corp., Security Response}, vol.~5, no.~6, p.~29, 2011.

\bibitem{giraldo2017security}
J.~Giraldo, E.~Sarkar, A.~A. Cardenas, M.~Maniatakos, and M.~Kantarcioglu,
  ``Security and privacy in cyber-physical systems: A survey of surveys,''
  \emph{IEEE Design \& Test}, vol.~34, no.~4, pp. 7--17, 2017.

\bibitem{sandberg2015cyberphysical}
H.~Sandberg, S.~Amin, and K.~H. Johansson, ``Cyberphysical security in
  networked control systems: An introduction to the issue,'' \emph{IEEE Control
  Systems}, vol.~35, no.~1, pp. 20--23, 2015.

\bibitem{chen2020guest}
Z.~Chen, F.~Pasqualetti, J.~He, P.~Cheng, H.~L. Trentelman, and F.~Bullo,
  ``Guest editorial: Special issue on security and privacy of distributed
  algorithms and network systems,'' \emph{IEEE Trans. on Autom. Control},
  vol.~65, no.~9, pp. 3725--3727, 2020.

\bibitem{lunze1992feedback}
J.~Lunze, \emph{Feedback\,control\,of\,large-scale\,systems}.\hskip 1em plus
  0.5em minus 0.4em\relax Prentice Hall, 1992.

\bibitem{siljak2011decentralized}
D.~D. Siljak, \emph{Decentralized control of complex systems}.\hskip 1em plus
  0.5em minus 0.4em\relax Courier Corporation, 2011.

\bibitem{scattolini2009architectures}
R.~Scattolini, ``Architectures for distributed and hierarchical model
  predictive control--a review,'' \emph{J. process control}, vol.~19, no.~5,
  pp. 723--731, 2009.

\bibitem{chanfreut2021survey}
P.~Chanfreut, J.~M. Maestre, and E.~F. Camacho, ``A survey on clustering
  methods for distributed and networked control systems,'' \emph{Annual Reviews
  in Control}, vol.~52, pp. 75--90, 2021.

\bibitem{blanke2016distributed}
M.~Blanke, M.~Kinnaert, J.~Lunze, and M.~Staroswiecki, ``Distributed fault
  diagnosis and fault-tolerant control,'' in \emph{Diagnosis and Fault-Tolerant
  Control}.\hskip 1em plus 0.5em minus 0.4em\relax Springer, 2016, pp.
  467--518.

\bibitem{teixeira2014}
A.~Teixeira, I.~Shames, H.~Sandberg, and K.~H. Johansson, ``Distributed fault
  detection and isolation resilient to network model uncertainties,''
  \emph{IEEE Trans. on Cybernetics}, vol.~44, no.~11, pp. 2024--2037, 2014.

\bibitem{arrichiello2015observer}
F.~Arrichiello, A.~Marino, and F.~Pierri, ``Observer-based decentralized fault
  detection and isolation strategy for networked multirobot systems,''
  \emph{IEEE Transactions on Control Systems Technology}, vol.~23, no.~4, pp.
  1465--1476, 2015.

\bibitem{davoodi2016simultaneous}
M.~Davoodi, N.~Meskin, and K.~Khorasani, ``Simultaneous fault detection and
  consensus control design for a network of multi-agent systems,''
  \emph{Automatica}, vol.~66, pp. 185--194, 2016.

\bibitem{boem2017distributed}
F.~Boem, R.~M.~G. Ferrari, C.~Keliris, T.~Parisini, and M.~M. Polycarpou, ``A
  distributed networked approach for fault detection of large-scale systems,''
  \emph{IEEE Transactions on Automatic Control}, vol.~62, no.~1, pp. 18--33,
  2017.

\bibitem{boem2020distributed}
F.~Boem, A.~J. Gallo, D.~M. Raimondo, and T.~Parisini, ``Distributed
  fault-tolerant control of large-scale systems: An active fault diagnosis
  approach,'' \emph{IEEE Transactions on Control of Network Systems}, vol.~7,
  no.~1, pp. 288--301, 2020.

\bibitem{khalili2020distributed}
M.~Khalili, X.~Zhang, Y.~Cao, M.~M. Polycarpou, and T.~Parisini, ``Distributed
  fault-tolerant control of multiagent systems: An adaptive learning
  approach,'' \emph{IEEE Transactions on Neural Networks and Learning Systems},
  vol.~31, no.~2, pp. 420--432, 2020.

\bibitem{Deghat2019a}
M.~Deghat, V.~Ugrinovskii, I.~Shames, and C.~Langbort, ``{Detection and
  mitigation of biasing attacks on distributed estimation networks},''
  \emph{Automatica}, vol.~99, pp. 369--381, 2019.

\bibitem{Keijzer2021DetectionOC}
T.~Keijzer, F.~Jarmolowitz, and R.~M.~G. Ferrari, ``Detection of cyber-attacks
  in collaborative intersection control,'' \emph{ECC}, pp. 62--67, 2021.

\bibitem{anguluri2019centralized}
R.~Anguluri, V.~Katewa, and F.~Pasqualetti, ``Centralized versus decentralized
  detection of attacks in stochastic interconnected systems,'' \emph{IEEE
  Trans. on Autom. Control}, vol.~65, no.~9, pp. 3903--3910, 2019.

\bibitem{barboni2020detection}
A.~Barboni, H.~Rezaee, F.~Boem, and T.~Parisini, ``Detection of covert
  cyber-attacks in interconnected systems: A distributed model-based
  approach,'' \emph{IEEE Transactions on Automatic Control}, vol.~65, no.~9,
  pp. 3728--3741, 2020.

\bibitem{gallo2020distributed}
A.~J. Gallo, M.~S. Turan, F.~Boem, T.~Parisini, and G.~Ferrari-Trecate, ``A
  distributed cyber-attack detection scheme with application to dc
  microgrids,'' \emph{IEEE Transactions on Automatic Control}, vol.~65, no.~9,
  pp. 3800--3815, 2020.

\bibitem{salinas2016}
S.~A. Salinas and P.~Li, ``Privacy-preserving energy theft detection in
  microgrids: A state estimation approach,'' \emph{IEEE Transactions on Power
  Systems}, vol.~31, no.~2, pp. 883--894, 2016.

\bibitem{NIKOUKHAH19981345}
R.~Nikoukhah,
  ``guaranteed\,active\,failure\,detection\,and\,isolation\,for\,linear
  dynamical\,system,'' \emph{Automatica}, vol.~34, no.~11, pp. 1345--1358,
  1998.

\bibitem{SCOTT20141580}
J.~K. Scott, R.~Findeisen, R.~D. Braatz, and D.~M. Raimondo, ``Input design for
  guaranteed fault diagnosis using zonotopes,'' \emph{Automatica}, vol.~50,
  no.~6, pp. 1580--1589, 2014.

\bibitem{SCOTT2016126}
J.~K. Scott, D.~M. Raimondo, G.~R. Marseglia, and R.~D. Braatz, ``Constrained
  zonotopes: A new tool for set-based estimation and fault detection,''
  \emph{Automatica}, vol.~69, pp. 126--136, 2016.

\bibitem{puig2010fault}
V.~Puig, ``Fault diagnosis and fault tolerant control using set-membership
  approaches: Application to real case studies,'' \emph{International Journal
  of Applied Mathematics and Computer Science}, 2010.

\bibitem{rostampour2020privatized}
V.~Rostampour, R.~M. Ferrari, A.~M. Teixeira, and T.~Keviczky, ``Privatized
  distributed anomaly detection for large-scale nonlinear uncertain systems,''
  \emph{IEEE Transactions on Automatic Control}, vol.~66, no.~11, pp.
  5299--5313, 2020.

\bibitem{blanchini2008set}
F.~Blanchini and S.~Miani, \emph{Set-theoretic methods in control}.\hskip 1em
  plus 0.5em minus 0.4em\relax Springer, 2008, vol.~78.

\bibitem{francis1977linear}
B.~A. Francis, ``The linear multivariable regulator problem,'' \emph{SIAM J. on
  Control and Optimization}, vol.~15, no.~3, pp. 486--505, 1977.

\bibitem{BARCELLI2011277}
D.~Barcelli, A.~Bemporad, and G.~Ripaccioli, ``Decentralized hierarchical
  multi-rate control of constrained linear systems,'' \emph{IFAC Proceedings
  Volumes}, vol.~44, no.~1, pp. 277--283, 2011.

\bibitem{Lan2016}
J.~Lan and R.~J. Patton, ``{A new strategy for integration of fault estimation
  within fault-tolerant control},'' \emph{Automatica}, vol.~69, pp. 48--59,
  2016.

\end{thebibliography}

\end{document}